\newtheorem{theorem}{Theorem}
\newtheorem{lemma}[theorem]{Lemma}
\newtheorem{corollary}[theorem]{Corollary}
\newtheorem{proposition}[theorem]{Proposition}
\newtheorem*{theorem*}{Theorem}
\newtheorem*{rep@theorem}{\rep@title}
\newcommand{\newreptheorem}[2]{%
\newenvironment{rep#1}[1]{%
 \def\rep@title{#2 ##1}%
 \begin{rep@theorem}}%
 {\end{rep@theorem}}}
\theoremstyle{definition}
\theoremstyle{remark}
\newcommand{\eps}{\varepsilon}
\newcommand{\R}{\ensuremath\mathbb{R}}
\newcommand{\G}{\ensuremath\mathcal{G}}
\newcommand{\D}{\ensuremath\mathcal{D}}
\newcommand{\LL}{\ensuremath\mathcal{L}}
\newcommand{\bS}{\ensuremath\mathbb{S}}
\newcommand{\eqdist}{\overset{\mathrm{dist}}{=\joinrel=}}
\DeclareMathOperator{\rk}{rank}
\DeclareMathOperator{\tr}{tr}
\DeclareMathOperator{\diag}{diag}
\DeclareMathOperator*{\E}{\mathbb{E}}
\DeclareMathOperator{\Unif}{Unif}
\title{Tight Bounds for Sketching the Operator Norm, Schatten Norms, and Subspace Embeddings\footnote{This paper appeared in the Proceedings of RANDOM/APPROX 2016, LIPIcs Vol.\ 60, 39:1--39:11. The current version corrects the proof of Corollary 7.

 Yi Li was supported by ONR grant N00014-14-1-0632 when he was at Harvard University, where the major part of this work was done. David P.\ Woodruff was at IBM Research Almaden when this work was done.}}
\author{	Yi Li\\
		School of Physical and Mathematical Sciences \\
		Nanyang Technological University\\ \texttt{yili@ntu.edu.sg} \\
		\and
		David P. Woodruff\\ 
		Department of Computer Science\\
		Carnegie Mellon University\\
		\texttt{dwoodruf@andrew.cmu.edu}
}
\date{}
\begin{document}

\maketitle

\begin{abstract}
We consider the following oblivious sketching problem: given $\epsilon \in (0,1/3)$
and $n \geq d/\epsilon^2$, 
design a distribution $\mathcal{D}$ over $\mathbb{R}^{k \times nd}$ and a function 
$f: \mathbb{R}^k \times \mathbb{R}^{nd} \rightarrow \mathbb{R}$, so that for any $n \times d$ matrix $A$,
$$\Pr_{S \sim \mathcal{D}} [(1-\epsilon) \|A\|_{op} \leq f(S(A),S) \leq (1+\epsilon)\|A\|_{op}] \geq 2/3,$$
where $\|A\|_{op} = \sup_{x:\|x\|_2 = 1} \|Ax\|_2$ is the operator norm of $A$ and $S(A)$ denotes
$S \cdot A$, interpreting $A$ as a vector in $\mathbb{R}^{nd}$. We show a tight lower
bound of $k = \Omega(d^2/\epsilon^2)$ for this problem. Previously, Nelson and Nguyen (ICALP, 2014)
considered the problem of finding a distribution $\mathcal{D}$ over $\mathbb{R}^{k \times n}$ such
that for any $n \times d$ matrix $A$, 
$$\Pr_{S \sim \mathcal{D}}[\forall x, \ (1-\epsilon)\|Ax\|_2 \leq \|SAx\|_2 \leq (1+\epsilon)\|Ax\|_2]
\geq 2/3,$$ which is called an oblivious subspace embedding (OSE). Our result considerably strengthens
theirs, as it (1) applies only to estimating the operator norm, which can be estimated given
any OSE, and (2) applies to distributions
over general linear operators $S$ which treat $A$ as a vector and compute $S(A)$, 
rather than the restricted class of linear operators corresponding to
matrix multiplication. 
Our technique also implies the first tight bounds for approximating the Schatten $p$-norm
for even integers $p$ via general linear sketches, improving the previous lower bound
from $k = \Omega(n^{2-6/p})$ [Regev, 2014] 
to $k = \Omega(n^{2-4/p})$. 
Importantly, for sketching the operator norm up to a factor of $\alpha$, where $\alpha - 1 = \Omega(1)$, 
we obtain a tight $k = \Omega(n^2/\alpha^4)$ bound, matching the upper bound of 
Andoni and Nguyen (SODA, 2013), and improving the previous $k = \Omega(n^2/\alpha^6)$ lower bound.
Finally, we also obtain the first lower bounds for approximating Ky Fan norms. 
\end{abstract}

\maketitle

\section{Introduction}\label{sec:intro}
Understanding the sketching complexity
of estimating matrix norms \cite{BS15,LNW14,Regev14,w14} has been a goal of recent work, generalizing 
a line of work on estimating frequency moments in the sketching
model~\cite{anpw13,lw13,pw12}, and in the somewhat related streaming model of computation~\cite{ams99}. 

In the sketching model, one fixes a distribution
$\mathcal{D}$ over $k\times (nd)$ matrices $S$, and is then given an $n \times d$ matrix $A$ which, without loss of generality, 
satisfies $n \geq d$. 
One then samples $S$ from $\mathcal{D}$, and computes $S(A)$, which denotes the operation of treating $A$ as a
column vector in $\mathbb{R}^{nd}$ and left-multiplying that vector by the matrix $S$. 
Any linear transformation
applied to $A$ can be expressed in this form, and therefore we sometimes refer to such a distribution $\mathcal{D}$
as a {\it general linear sketch}. There is also the related notion of a {\it bilinear sketch}, in which one fixes a 
distribution $\mathcal{D}$ over $k \times n$ matrices $S$, and is then given an $n \times d$ matrix $A$. One
samples $S$ from $\mathcal{D}$ and computes $S \cdot A$. Bilinear sketches are special cases of general linear sketches since they form a subclass of all possible linear transformations of $A$,
and general linear sketches can be much more powerful than bilinear sketches. 
For example, to compute the trace exactly of an $n \times n$
matrix $A$, setting $k = 1$ suffices for a general linear sketch, while we do not know how to compute the trace
with a small value $k$ for bilinear sketches, and several lower bounds on $k$ are known even to approximate the trace 
\cite{wwz14}.

The goal
in the sketching model 
is to minimize the {\it sketching dimension} $k$ so that $S(A)$ can be used to approximate a property of $A$ with constant probability. Associated with distribution $\mathcal{D}$ is an estimation procedure, which we model as a function $f$,
for which $f(S(A), S)$ outputs a correct answer to the problem at hand with constant probability. For numerical properties, such as estimating a norm of $A$, this probability can be amplified to $1-\delta$, by creating a distribution $\mathcal{D}'$ corresponding to taking $O(\log (1/\delta))$ independent copies $S^1, \ldots, S^{\log(1/\delta)}$ from $\mathcal{D}$, and outputting the median of 
$$f(S^1(A), S^1), f(S^2(A), S^2), \ldots, f(S^{\log(1/\delta)}(A), S^{\log(1/\delta)}).$$
Notice that the mapping $S$ is linear and
oblivious, both of which are important for a number of applications such as merging sketches in distributed
computation, or for approximately recovering a signal in compressed sensing. 
Minimizing $k$ is crucial for these applications, as it corresponds to the communication 
or number of observations of the underlying algorithm. 

A quantity of interest is the operator norm. Given a matrix $A$, the operator norm $\|A\|_{op}$ 
is defined to be $\|A\|_{op} =\sup_{x:\|x\|_2=1} \|Ax\|_2$. 
The operator norm arises in several applications; for example 
one sometimes approximates a matrix $A$ by another matrix $\hat{A}$ for which
$\|A -\hat{A}\|_2$ is small. Often $\hat{A}$ has low rank, in which case this is the low rank approximation
problem with spectral error, see, e.g., recent work on this \cite{mm15}. 
If one had an estimator for the operator norm of 
$A - \hat{A}$, one could
use it to verify if $\hat{A}$ is a good approximation to $A$. Given the linearity in the sketching model,
if $S$ is sampled from a distribution $D$, one can compute $S(A) - S(\hat{A}) = S(A - \hat{A})$, from
which one then has an estimation procedure to estimate 
$\|A-\hat{A}\|_2$ as $f(S(A-\hat{A}), S)$. 
In the sketching model, it was first 
shown that approximating the operator norm up to a constant factor requires $k = \Omega(d^{3/2})$ \cite{LNW14}, 
which was later improved by Regev to the tight $k = \Omega(d^2)$ \cite[Section 6.2]{w14}. Note that these
lower bounds rule out {\it any} possible function $f$ as the estimation procedure. 
It is also implicit in \cite[Section 6.2]{w14} that approximating the operator norm up to a factor 
$\alpha$, where $\alpha - 1 = \Omega(1)$, requires $k = \Omega(d^2/\alpha^6)$. 
Andoni and Nguyen showed an upper bound of $k = O(d^2/\alpha^4)$ \cite{AN13}, that is, they constructed a distribution $\mathcal{D}$ and corresponding estimation procedure $f$ for which it suffices to set $k = O(d^2/\alpha^4)$. This follows by Theorem 1.2 of \cite{AN13}.

A wide class of matrix norms is the Schatten $p$-norms, which are the analogues of 
$\ell_p$-norms of vectors 
and contain the operator norm as a special case. The Schatten $p$-norm of matrix $A$ is denoted 
by $\|A\|_p$ and defined to be 
$\|A\|_p = (\sum_{i=1}^n (\sigma_i(A))^p)^{1/p}$, where $\sigma_1,\dots,\sigma_n$ 
are the singular values of $A$. When $p < 1$, $\|A\|_p$ is not a norm but still a well-defined quantity. For $p=0$, 
viewing $\|A\|_0$ as the limit $\lim_{p\to 0^+} \|A\|_p$ recovers exactly the {\it rank} of $A$, which has 
been studied in the data stream~\cite{BS15,cw09} and property testing models~\cite{ks03,lww14}. 
When $p = 1$, it is the nuclear or trace norm\footnote{The trace norm is not to be confused with
the trace. These two quantities only coincide if $A$ is positive semidefinite.}, which has applications 
in differential privacy~\cite{hlm10,lm12} 
and non-convex optimization~\cite{cr12,dtv11}. When $p = 2$, it is the Frobenius norm, and when $p\to\infty$, 
it holds that $\|A\|_p$ tends to $\|A\|_{op}$. 
Such norms are useful in geometry and linear algebra, see, e.g.,~\cite{w14}. 
A $k = \Omega(\sqrt{d})$ lower bound for every $p \geq 0$ was shown in~\cite{lnw14b}. For
$p > 2$ a lower bound on the sketching dimension of $k = \Omega(d^{2/3-3/p})$, and an upper bound of 
$k = O(d^{2-4/p})$ were shown in~\cite{lnw14b}. 
The upper bound is only known to hold when $p$ is an even integer. The lower bound was
improved by Regev to 
$k = \Omega(d^{2-6/p})$ for $p > 6$ \cite[Section 6.2]{w14}\footnote{The section discusses only the case of $p=\infty$, i.e., the operator norm, but the same method can be used for general $p$ and gives the bound claimed here.}.

Other related work includes that on {\it oblivious subspace embeddings} (OSEs), which fall into the category
of bilinear sketches.  
Here one seeks a distribution $\mathcal{D}$ over $\mathbb{R}^{k \times n}$ such
that for any $n \times d$ matrix $A$, 
$$\Pr_{S \sim \mathcal{D}}[\forall x, \ (1-\epsilon)\|Ax\|_2 \leq \|SAx\|_2 \leq (1+\epsilon)\|Ax\|_2]
\geq 2/3.$$
This notion has proved important in numerical linear algebra, and has led to the fastest
known algorithms for low rank approximation and regression \cite{cw13,mm13,nn13}. Since an OSE
has the property that $\|SAx\|_2 = (1 \pm \epsilon) \|Ax\|_2$ for all $x$, it holds in particular
that $\|SA\|_{op} = (1 \pm \epsilon) \|A\|_{op}$, where the notation $a = (1 \pm \epsilon) b$ means
$(1-\epsilon) b \leq a \leq (1+\epsilon)b$. When $n \geq d/\epsilon^2$, Nelson and Nguyen show the
tight bound that any OSE requires $k = \Omega(d/\epsilon^2)$ \cite{NN14}. 

Finally, we mention recent related work in the data stream model on approximation of matrix norms
\cite{BS15,lw16}. 
Here one sees
elements of $A$ one at a time and the goal is to output an approximation to $\|A\|_p$. 
It is important to note that the data stream model and sketching models are incomparable. The main
reason for this is that unlike in the data stream model, 
the bit complexity is not accounted for in the sketching model, 
and both $S$ and $A$ are assumed to have entries which are real numbers. The latter is the common
model adopted in compressed sensing. In the data stream model, if one wants to output a vector
$v \in \{0, 1, \ldots, M-1, M\}^n$, one needs $n \log M$ bits of space. On the other hand, if
$u$ is the vector $(1, (M+1), (M+1)^2, (M+1)^3, \ldots, (M+1)^n)$, then from $\langle u, v \rangle$,
one can output $v$, so the sketching dimension $k$ is only equal to $1$. The sketching complexity
thus gives a meaningful measure of complexity in the real RAM model. Conversely, lower bounds in
the sketching model do not translate into lower bounds in the data stream model.
This statement holds even given the 
work of \cite{LNW14} which characterizes turnstile streaming algorithms as linear sketches. The
problem is that lower bounds in the sketching model 
involve continuous distributions and after discretizing the distributions it is no longer clear if the 
lower bounds hold.

\subsection{Our Contributions}
In this paper we strengthen known sketching lower bounds for the operator norm, 
Schatten $p$-norms, and subspace embeddings. Our lower bounds are optimal for any approximation to 
the operator norm, for subspace embeddings, and for Schatten $p$-norms for even integers $p$. 
We first describe our results for the operator norm, as the results for Schatten $p$-norms and 
subspace embeddings follow from them. 

We consider the following problem: given $\epsilon \in (0,1/3)$
and $n \geq d/\epsilon^2$, 
design a distribution $\mathcal{D}$ over $\mathbb{R}^{k \times nd}$ and a function $f: \mathbb{R}^k \times \mathbb{R}^{k \times nd} \rightarrow \mathbb{R}$, so that for any $n \times d$ matrix $A$,
$$\Pr_{S \sim \mathcal{D}} [(1-\epsilon) \|A\|_{op} \leq f(S(A), S) \leq (1+\epsilon)\|A\|_{op}] \geq 2/3,$$
For this problem, we show a tight $k = \Omega(d^2/\epsilon^2)$ lower bound. 
Our result considerably strengthens
the result of Nelson and Nguyen \cite{NN14} as it 
(1) applies only to estimating the operator norm, which can be estimated given
any OSE, and (2) applies to general linear sketches rather than only to bilinear sketches. 
Regarding (1), this 
shows that designing a general linear sketch for approximating the operator norm of a matrix is 
{\it as hard as designing an oblivious subspace embedding}.
Regarding (2), we lower bound a much larger class of data structures than OSEs that
one could
use to approximate $\|Ax\|_2$ for all vectors $x$. 

We then generalize the argument above to handle approximation factors $\alpha$, with $\alpha - 1 = \Omega(1)$, 
for approximating the operator norm. In this case we consider $n = d$, which is without loss of generality
since by first applying an OSE $S$ to $A$ with $k = O(d)$, replacing $A$ with $S \cdot A$, all singular
values of $A$ are preserved up to a constant factor (we can also pad $SA$ with zero columns to make 
$SA$ be a square matrix) - see Appendix C of \cite{lnw14b}. We can then apply our general linear
sketch to $SA$ (the composition of linear sketches is a general linear sketch). We show a lower
bound of $k = \Omega(n^2/\alpha^4)$, improving the previous $k = \Omega(n^2/\alpha^6)$ bound,
and maching the $k = O(n^2/\alpha^4)$ upper bound. This answers Open Question 2 in \cite{lnw14b}.  

The proof shows the problem is already hard
to distinguish between the two cases: (1) $A$ has one singular value of value $\Theta(\alpha)$ 
and remaining singular values of
value $\Theta(1)$, versus (2) all singular values of $A$ are of value $\Theta(1)$. 
By setting $\alpha = n^{1/p}$, we are able to obtain a constant factor gap
in the Schatten-$p$ norm in the two cases, and therefore additionally obtain an $\Omega(n^{2-4/p})$
lower bound for Schatten $p$-norms for constant factor approximation. This improves the previous
$\Omega(n^{2-6/p})$ lower bound, and matches the known upper bound for even integers $p$. Our proof
also establishes a lower bound of $k = \Omega(n^2/s^2)$ 
for estimating the Ky-Fan $s$-norm of an $n \times n$
matrix $A$ up to a constant factor, whenever $s \leq .0789\sqrt{n}$. 

Our main technical novelty is avoiding a deep 
theorem of Lata\l{}a~\cite{latala} concerning tail bounds for Gaussian chaoses
used in the prior lower bounds for sketching the operator norm and Schatten $p$-norms. 
Instead we prove a simple lemma (Lemma~\ref{lem:Eexp(x^TAy)}) allowing us
to bound $\mathbb{E}_{x,y}[e^{x^T Ay}]$ for Gaussian vectors $x$ and $y$ and a 
matrix $A$, 
in terms of the Frobenius norm 
of $A$. Surprisingly, this lemma
suffices for directly upper-bounding the $\chi^2$-distance between the distributions 
considered in previous works, and without losing any additional factors. 
Our technical arguments
are thus arguably more elementary and simpler than those given in previous work. 

\section{Preliminaries}\label{sec:prelim}

\subsection*{Notation.} Let $\R^{n\times d}$ be the set of $n\times d$ real matrices and  $N(\mu,\Sigma)$ denote the (multi-variate) normal distribution of mean $\mu$ and covariance matrix $\Sigma$. 
We write $X\sim \mathcal{D}$ for a random variable $X$ subject to a probability distribution $\mathcal{D}$. 
Denote by $\G(n,n)$ the ensemble of random matrices with entries i.i.d.\ $N(0,1)$. 
	
\subsection*{Singular values and matrix norms.} Consider a matrix $A\in \R^{n \times n}$. Then $A^TA$ is a positive semi-definite matrix. The eigenvalues of $\sqrt{A^TA}$ are called the singular values of $A$, denoted by $\sigma_1(A)\geq \sigma_2(A)\geq \cdots \geq\sigma_n(A)$ in decreasing order. Let $r=\rk(A)$. It is clear that $\sigma_{r+1}(A) = \cdots = \sigma_n(A) = 0$. 
Define $
\|A\|_p = (\sum_{i=1}^r (\sigma_i(A))^p)^{1/p}
$ ($p>0$).
For $p\geq 1$, it is a norm over $\R^{n\times d}$, called the $p$-th \textit{Schatten norm}, over $\R^{n\times n}$ for $p\geq 1$. When $p=1$, it is also called the trace norm or nuclear norm. When $p=2$, it is exactly the Frobenius norm $\|A\|_F$. 
Let $\|A\|_{op}$ denote the operator norm of $A$ when treating $A$ as a linear operator from $\ell_2^n$ to $\ell_2^n$. It holds that $\lim_{p\to\infty} \|A\|_p = \sigma_1(A) = \|A\|_{op}$.

The Ky-Fan $s$-norm of $A$, denoted by $\|A\|_{F_s}$, is defined as the sum of the largest $s$ singular values: $\|A\|_{F_s} = \sum_{i=1}^s \sigma_i(A)$. Note that $\|A\|_{F_1} = \|A\|_{op}$ and $\|A\|_{F_s} = \|A\|_1$ for $s\geq r$.

\subsection*{Distance between probability measures.} Suppose $\mu$ and $\nu$ are two probability measures over some Borel algebra $\mathcal{B}$ on $\R^n$ such that $\mu$ is absolutely continuous with respect to $\nu$. For a convex function $\phi:\R\to \R$ such that $\phi(1)=0$, we define the $\phi$-divergence
\[
D_\phi(\mu || \nu) = \int \phi\left(\frac{d\mu}{d\nu}\right)d\nu.
\]
In general $D_\phi(\mu||\nu)$ is not a distance because it is not symmetric. 

The \textit{total variation distance} between $\mu$ and $\nu$, denoted by $d_{TV}(\mu,\nu)$, is defined as $D_\phi(\mu||\nu)$ for $\phi(x) = |x-1|$. It can be verified that this is indeed a distance.

The \textit{$\chi^2$-divergence} between $\mu$ and $\nu$, denoted by $\chi^2(\mu||\nu)$, is defined as $D_\phi(\mu||\nu)$ for $\phi(x) = (x-1)^2$ or $\phi(x) = x^2-1$. It can be verified that these two choices of $\phi$ give exactly the same value of $D_\phi(\mu||\nu)$.

\begin{proposition}[{\cite[p90]{Tsybakov}}] \label{prop:TV_chi^2}
$d_{TV}(\mu,\nu) \leq \sqrt{\chi^2(\mu||\nu)}$.
\end{proposition}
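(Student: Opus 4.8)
The plan is to derive this directly from the Cauchy--Schwarz inequality, exploiting the hypothesis that $\nu$ is a probability measure and that $\mu$ is absolutely continuous with respect to $\nu$. First I would write $g = d\mu/d\nu$ for the Radon--Nikodym derivative, which is well defined $\nu$-almost everywhere precisely because of the absolute continuity assumption. Unpacking the definitions from the preliminaries, $d_{TV}(\mu,\nu) = \int |g-1|\, d\nu$ (taking $\phi(x) = |x-1|$) and $\chi^2(\mu\|\nu) = \int (g-1)^2\, d\nu$ (taking $\phi(x) = (x-1)^2$).

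The key step is to apply Cauchy--Schwarz in $L^2(\nu)$ to the two functions $|g-1|$ and the constant function $1$:
\[
d_{TV}(\mu,\nu) = \int |g-1|\cdot 1\, d\nu \le \left(\int (g-1)^2\, d\nu\right)^{1/2}\left(\int 1\, d\nu\right)^{1/2}.
\]
Since $\nu$ is a probability measure, $\int 1\, d\nu = 1$, so the right-hand side is exactly $\sqrt{\chi^2(\mu\|\nu)}$, giving the claimed inequality.

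There is essentially no obstacle here; the statement is a classical one and the only points meriting a word of care are that the absolute continuity hypothesis is what makes $g$ (and hence both divergences) meaningful, and that the Cauchy--Schwarz step is valid regardless of whether $\chi^2(\mu\|\nu)$ is finite (if it is infinite the bound is vacuous). An equivalent route would be to invoke Jensen's inequality for the convex map $t \mapsto t^2$ against the probability measure $\nu$, applied to the integrand $|g-1|$, but the Cauchy--Schwarz formulation is the most transparent.
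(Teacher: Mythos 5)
Your proof is correct, and since the paper does not supply its own argument for this proposition (it simply cites Tsybakov), your Cauchy--Schwarz derivation is exactly the standard one found in that reference. The only point worth noting is that you correctly use the paper's unnormalized convention $d_{TV}=\int|g-1|\,d\nu$ (no factor of $1/2$), which is what makes the constant in the bound come out as stated.
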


\begin{proposition}[{\cite[p97]{IS}}] \label{prop:chi^2}
$\chi^2(N(0,I_n)\ast \mu||N(0,I_n)) \leq \E e^{\langle x,x'\rangle}-1$, where $x,x'\sim \mu$ are independent.
\end{proposition}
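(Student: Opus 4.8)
The plan is to compute the $\chi^2$-divergence directly; in fact it comes out as an equality, and the stated inequality is just the convenient form needed later. To avoid clashing with the convex function $\phi$ in the definition of the $\phi$-divergence, write $\gamma(z)=(2\pi)^{-n/2}e^{-\|z\|_2^2/2}$ for the density of $N(0,I_n)$. Since $\gamma>0$ everywhere, $\nu\eqdef N(0,I_n)\ast\mu$ is absolutely continuous with respect to $N(0,I_n)$ and has density $p(z)=\int\gamma(z-x)\,d\mu(x)$, so taking the convex function $\phi(t)=t^2-1$ in the definition gives
\[
\chi^2(\nu\,\|\,N(0,I_n))=\int_{\R^n}\frac{p(z)^2}{\gamma(z)}\,dz-1 .
\]

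Next I would expand $p(z)^2=\iint\gamma(z-x)\gamma(z-x')\,d\mu(x)\,d\mu(x')$ and exchange the order of integration — legitimate by Tonelli's theorem since the integrand is nonnegative — to get
\[
\int_{\R^n}\frac{p(z)^2}{\gamma(z)}\,dz=\iint\Bigl(\int_{\R^n}\frac{\gamma(z-x)\gamma(z-x')}{\gamma(z)}\,dz\Bigr)d\mu(x)\,d\mu(x').
\]
Then I would evaluate the inner integral by completing the square in $z$: collecting the three quadratic exponents and using the identity $-\tfrac12\|x\|_2^2-\tfrac12\|x'\|_2^2+\tfrac12\|x+x'\|_2^2=\langle x,x'\rangle$ gives
\[
\frac{\gamma(z-x)\gamma(z-x')}{\gamma(z)}=(2\pi)^{-n/2}e^{-\|z-(x+x')\|_2^2/2}\,e^{\langle x,x'\rangle},
\]
so the $z$-integral equals $e^{\langle x,x'\rangle}$ because the remaining Gaussian integrates to $1$. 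Substituting back,
\[
\int_{\R^n}\frac{p(z)^2}{\gamma(z)}\,dz=\iint e^{\langle x,x'\rangle}\,d\mu(x)\,d\mu(x')=\E\,e^{\langle x,x'\rangle}
\]
for independent $x,x'\sim\mu$, and hence $\chi^2(\nu\,\|\,N(0,I_n))=\E\,e^{\langle x,x'\rangle}-1$, which gives the claimed bound.

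There is essentially no obstacle here. The only points that need (routine) care are the interchange of integration — immediate from nonnegativity of the integrand via Tonelli, and harmless even when $\E\,e^{\langle x,x'\rangle}=\infty$, in which case the asserted bound is vacuous — and the bookkeeping of the quadratic forms in the one-line Gaussian integral, where I would simply double-check that the cross terms are tracked correctly when completing the square. (The analogous statement with $N(0,\Sigma)$ in place of $N(0,I_n)$ follows from the same computation with $\langle x,x'\rangle$ replaced by $x^T\Sigma^{-1}x'$, but that refinement is not needed here.)
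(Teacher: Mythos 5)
Your proof is correct, and the bookkeeping checks out: with $\gamma$ the standard Gaussian density, the exponent of $\gamma(z-x)\gamma(z-x')/\gamma(z)$ is $-\tfrac12\|z\|_2^2+\langle z,x+x'\rangle-\tfrac12\|x\|_2^2-\tfrac12\|x'\|_2^2$, and completing the square in $z$ produces exactly the leftover term $\tfrac12\|x+x'\|_2^2-\tfrac12\|x\|_2^2-\tfrac12\|x'\|_2^2=\langle x,x'\rangle$, so the inner $z$-integral is $e^{\langle x,x'\rangle}$ and Tonelli justifies the interchange. Note, however, that the paper itself does not prove this proposition; it is quoted directly from Ingster and Suslina \cite{IS}, so there is no in-paper argument to compare against. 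Your computation is the standard one from that source (the $\chi^2$-divergence of a Gaussian location mixture against the centered Gaussian), and you correctly observe that it in fact yields equality, $\chi^2(N(0,I_n)\ast\mu\,\|\,N(0,I_n))=\E e^{\langle x,x'\rangle}-1$, of which the stated $\leq$ is a weakening that is all the later argument needs. The only thing I would tighten is the phrase ``it comes out as an equality, and the stated inequality is just the convenient form'': that is true here, but since the proposition as written only asserts the inequality, it is worth making explicit that you are proving something slightly stronger, so a reader does not think the $\leq$ hides an additional estimate you skipped.
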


\section{Sketching Lower Bound for $p > 2$}\label{sec:sketching_p>2}
We follow the notations in \cite{lnw14b} throughout this section, though
the presentation here is self-contained. To start, we present the following lemma.

\begin{lemma}[\footnote{A similar result holds for subgaussian vectors $x$ and $y$ with the right-hand side replaced with $\exp(c\|A\|_F^2)$ for some absolute constant $c>0$, whose proof requires heavier machinery. We only need the elementary variant here by our choice of hard instance.}]\label{lem:Eexp(x^TAy)}
Suppose that $x\sim N(0,I_m)$ and $y\sim N(0,I_n)$ are independent and $A\in \R^{m\times n}$ satisfies $\|A\|_F < 1$. It holds that
\[
\E_{x,y} e^{x^T Ay} \leq \frac{1}{\sqrt{1-\|A\|_F^2}}.
\]
\end{lemma}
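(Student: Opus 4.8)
The plan is to condition on $y$ and exploit rotational invariance of the Gaussian. Fix $y$; then $x^T A y = \langle x, Ay\rangle$ is a centered Gaussian in $x$ with variance $\|Ay\|_2^2$, so by the Gaussian moment generating function $\E_x e^{x^T A y} = \exp(\tfrac12\|Ay\|_2^2)$. Taking the expectation over $y$ reduces the claim to bounding $\E_y \exp\!\big(\tfrac12\, y^T (A^T A) y\big)$.

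Next I would diagonalize $M \eqdef A^T A$, which is PSD with eigenvalues $\lambda_1,\dots,\lambda_n \geq 0$ satisfying $\sum_i \lambda_i = \|A\|_F^2 < 1$; in particular $\lambda_i < 1$ for every $i$. Since $y\sim N(0,I_n)$ is rotationally invariant, its coordinates in an orthonormal eigenbasis of $M$ are i.i.d.\ $N(0,1)$, say $g_1,\dots,g_n$, so $y^T M y = \sum_i \lambda_i g_i^2$. By independence and the chi-squared MGF $\E e^{t g^2} = (1-2t)^{-1/2}$ (valid here since $\lambda_i/2 < 1/2$),
\[
\E_y e^{\frac12 y^T M y} = \prod_{i=1}^n \E e^{\frac{\lambda_i}{2} g_i^2} = \prod_{i=1}^n \frac{1}{\sqrt{1-\lambda_i}}.
\]

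It then remains to show $\prod_i (1-\lambda_i) \geq 1 - \sum_i \lambda_i$ for $\lambda_i \in [0,1)$, which combined with $\sum_i\lambda_i = \|A\|_F^2$ gives the stated bound. This is the Weierstrass product inequality, proved by a one-line induction on $n$: since $1-\lambda_{n+1}\geq 0$, the inductive step gives $\prod_{i\le n+1}(1-\lambda_i) \geq (1-\lambda_{n+1})\big(1-\sum_{i\le n}\lambda_i\big) = 1 - \sum_{i\le n+1}\lambda_i + \lambda_{n+1}\sum_{i\le n}\lambda_i \geq 1 - \sum_{i\le n+1}\lambda_i$.

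There is no serious obstacle here; the only points to watch are that the conditional MGF computations require $\lambda_i < 1$ for \emph{every} $i$, which is exactly where the hypothesis $\|A\|_F < 1$ (rather than merely finiteness) is used, and that passing from the product $\prod_i(1-\lambda_i)^{-1/2}$ to the single quantity $\|A\|_F$ is not automatic but is precisely the content of the product inequality above.
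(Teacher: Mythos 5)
Your proof is correct and follows essentially the same route as the paper's: both reduce to the product $\prod_i(1-\sigma_i^2)^{-1/2}$ over the singular values of $A$ (you via conditioning on $y$ and the $\chi^2$ MGF applied to the eigenvalues of $A^TA$, the paper via the SVD and a joint 2D Gaussian integral for each $\sigma_i x_i y_i$), and both then invoke the Weierstrass product inequality $\prod_i(1-\sigma_i^2)\geq 1-\sum_i\sigma_i^2$. The only substantive difference is that you spell out the induction for the product inequality, which the paper states without proof.
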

\begin{proof}
First, it is easy to verify that
\begin{align*}
\E_{x,y\sim N(0,1)} e^{axy} &= \frac{1}{2\pi}\iint_{\R\times \R} e^{axy-\frac{x^2+y^2}{2}} dxdy\\
&= \frac{1}{2\pi}\int_{\R}\int_{\R} e^{-\frac{1}{2} (x-ay)^2}e^{-\frac{1}{2}(1-a^2)y^2} dx dy\\
&= \frac{1}{\sqrt{2\pi}}\int_{\R} e^{-\frac{1}{2}(1-a^2)y^2} dy\\
& = \frac{1}{\sqrt{1-a^2}},\quad a\in [0,1).
\end{align*}
Without loss of generality, assume that $m\geq n$. Consider the singular value decomposition $A = U\Sigma V^T$ where $U$ and $V$ are orthogonal matrices of dimension $m$ and $n$ respectively and $\Sigma = \diag\{\sigma_1,\dots,\sigma_n\}$ with $\sigma_1,\dots,\sigma_n$ being the non-zero singular values of $A$. We know that $\sigma_i \in [0,1)$ for all $i$ by the assumption that $\|A\|_F < 1$. By rotational invariance of the Gaussian distribution, we may assume that $m = n$ and thus
\begin{align*}
\E_{x,y\sim N(0,I_n)} e^{x^TAy} &= \E_{x,y\sim N(0,I_n)} e^{x^T\Sigma y}\\
&= \frac{1}{(2\pi)^{n}}\iint_{\R^n\times \R^n} \exp\left\{\sum_{i=1}^n \left(\sigma_i x_i y_i - \frac{x_i^2+y_i^2}{2}\right)\right\} dxdy\\
&= \prod_{i=1}^n \frac{1}{\sqrt{1-\sigma_i^2}}\\ 
&\leq \frac{1}{\sqrt{1-\sum_{i=1}^n \sigma_i^2}}\\
&= \frac{1}{\sqrt{1- \|A\|_F^2}}.\qedhere
\end{align*}
\end{proof}

Next we consider the problem of distinguishing two distributions $\D_1 = \G(m,n)$ and $\D_2$ as defined below. Let $u_1,\dots,u_r$ be i.i.d.~$N(0,I_m)$ vectors and $v_1,\dots,v_r$ i.i.d.~$N(0,I_n)$ vectors and further suppose that $\{u_i\}$ and $\{v_i\}$ are independent. Let $s\in \R^r$ and define the distribution $\D_2$ as $\G(m,n) + \sum_{i=1}^r s_i u^i (v^i)^T$. We take $k$ linear measurements and denote the corresponding rows (measurements) of the sketching matrix by $L^1,\dots,L^k$. Without loss of generality we may assume that $\tr((L^i)^T L^i))=1$ and $\tr((L^i)^T L^j))=0$ for $i\neq j$, since this corresponds to the rows of the sketching matrix being orthonormal, which we can assume since we can always change the basis of the row space of the sketching matrix in a post-processing step. Let $\LL_1$ and $\LL_2$ be the corresponding distribution of the linear sketch of dimension $k$ on $\D_1$ and $\D_2$, respectively. The main result is the following theorem.

\begin{theorem}\label{thm:sketch_lower_bound} There exists an absolute constant $c>0$ such that $d_{TV}(\LL_1,\LL_2)\leq 1/10$ whenever $k\leq c/\|s\|_2^4$.
\end{theorem}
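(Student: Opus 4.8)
The plan is to bound $d_{TV}(\LL_1,\LL_2)$ via the chain
$d_{TV}(\LL_1,\LL_2) \le \sqrt{\chi^2(\LL_2 \| \LL_1)}$ (Proposition~\ref{prop:TV_chi^2}),
and then to control the $\chi^2$-divergence using Proposition~\ref{prop:chi^2} and
Lemma~\ref{lem:Eexp(x^TAy)}. The first observation is that because the rows
$L^1,\dots,L^k$ of the sketching matrix are orthonormal with respect to the trace
inner product on $\R^{m\times n}$, applying the sketch to $\D_1 = \G(m,n)$ yields exactly
$\LL_1 = N(0,I_k)$: each coordinate $\langle L^i, G\rangle$ for $G \sim \G(m,n)$ is a
standard Gaussian, and they are uncorrelated, hence independent. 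For $\D_2$, the sketch
of $\G(m,n) + \sum_{i=1}^r s_i u^i (v^i)^T$ decomposes as the same $N(0,I_k)$ Gaussian part
plus the random vector $w \in \R^k$ with $w_j = \sum_{i=1}^r s_i \langle L^j, u^i(v^i)^T\rangle
= \sum_{i=1}^r s_i (u^i)^T L^j v^i$ (conditioning on the $\{u^i\},\{v^i\}$). Thus
$\LL_2 = N(0,I_k) \ast \mu$, where $\mu$ is the law of $w$, and Proposition~\ref{prop:chi^2}
gives $\chi^2(\LL_2\|\LL_1) \le \E\, e^{\langle w, w'\rangle} - 1$, with $w,w'$ i.i.d.\ copies
drawn from $\mu$.

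The next step is to compute $\langle w, w'\rangle$ explicitly. Writing out the two copies
with independent Gaussian vectors $\{u^i\},\{v^i\}$ and $\{(u')^i\},\{(v')^i\}$, we get
$\langle w,w'\rangle = \sum_{j=1}^k \sum_{i,i'} s_i s_{i'} \big((u^i)^T L^j v^i\big)\big(((u')^{i'})^T L^j (v')^{i'}\big)$.
The key structural point is to fix all four families of vectors except for one pair, say
$v^1$ and its primed analogue (or to integrate iteratively), and observe that conditioned
on everything else the exponent is a bilinear form $x^T M y$ in two independent Gaussian
vectors, so Lemma~\ref{lem:Eexp(x^TAy)} applies as long as the relevant $M$ has Frobenius
norm strictly below $1$. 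I would iterate this: integrate out one Gaussian vector at a time,
each time reducing $\E e^{(\text{bilinear form})}$ to $(1 - \|M\|_F^2)^{-1/2}$ and collecting
the resulting Frobenius-norm factors. Because $\sum_j \langle L^j, \cdot\rangle^2 \le \|\cdot\|_F^2$
(the $L^j$ are orthonormal, so projecting onto their span only shrinks the norm) and
$\|u^i(v^i)^T\|_F = \|u^i\|_2\|v^i\|_2$, the accumulated Frobenius norms will be controlled
by $\|s\|_2^2$ times $O(\sqrt{mn})$-type Gaussian norm factors — but crucially the $\sqrt{mn}$
growth is killed by the orthonormality constraint, which forces $k$ factors of the form
$\langle L^j, u^i (v^i)^T \rangle$ that together contribute only $\|u^i\|_2^2\|v^i\|_2^2/(mn)$-scale
mass per index, against $k \le c/\|s\|_2^4$ measurements. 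Tracking this bookkeeping so that
the final bound is $\E e^{\langle w,w'\rangle} - 1 \le (1/10)^2$ is the crux.

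The main obstacle I anticipate is exactly this iterated-integration bookkeeping: after
integrating out, say, all the $v$'s and $v'$'s, the remaining exponent in the $u$'s and
$u'$'s is no longer a clean bilinear form but involves the matrices $\sum_j (L^j v^i)(L^j (v')^{i'})^T$
(or their appropriate contractions), and I must verify both that these are small enough in
Frobenius norm for Lemma~\ref{lem:Eexp(x^TAy)} to be invokable (this is where the hypothesis
$k \le c/\|s\|_2^4$ enters, presumably via a union bound or a direct moment computation showing
the relevant Frobenius norms concentrate below $1$) and that the product of all the
$(1 - \|\cdot\|_F^2)^{-1/2}$ factors telescopes to something like $(1 - O(k\|s\|_2^4))^{-O(1)}$.
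One clean way to organize this: by rotational invariance reduce to $L^j$ being (rescaled)
coordinate-block projections or use that $\{L^j\}$ orthonormal implies
$\sum_j \langle L^j, X\rangle \langle L^j, Y\rangle = \langle P X, Y\rangle$ for the orthogonal
projection $P$ onto $\spn\{L^j\}$, then bound everything by replacing $P$ with the identity,
at the cost of only constant factors, so that the whole computation reduces to the fully
symmetric case $k = mn$ where all the Gaussian integrals factor. If that reduction works,
the remaining work is a direct application of Lemma~\ref{lem:Eexp(x^TAy)} term by term,
and choosing $c$ small enough that $(1-c')^{-1/2} - 1 \le 1/100$.
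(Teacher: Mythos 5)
Your high-level plan --- bound $d_{TV}$ by $\sqrt{\chi^2}$ via Proposition~\ref{prop:TV_chi^2}, reduce to bounding $\E e^{\langle w,w'\rangle}$ via Proposition~\ref{prop:chi^2}, and then use Lemma~\ref{lem:Eexp(x^TAy)} for the Gaussian bilinear integrals --- is exactly the skeleton of the paper's proof, and you have correctly identified $\LL_1 = N(0,I_k)$ and the distribution $\mu$ of $w$. However, your proposed way of closing the argument has a genuine gap. The ``replace $P$ with the identity'' reduction does not work: while $\sum_j\langle L^j,X\rangle^2 = \|PX\|_F^2 \le \|X\|_F^2$ is a valid inequality, it is far from constant-factor tight --- in expectation $\|PX\|_F^2$ is smaller than $\|X\|_F^2$ by a factor of $k/(mn)$, so replacing $P$ by $I$ loses a factor of $mn/k$ and erases precisely the $k$-dependence you are trying to prove. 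And for off-diagonal terms $\langle PX,Y\rangle$ with $X\neq Y$ there is no one-sided bound at all, since the sign is arbitrary. So this does not ``reduce to the symmetric case $k=mn$.''

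The paper's proof avoids the iterated-integration bookkeeping you worry about; there is only one round of integration. Fix the unprimed Gaussians $\{u^i\},\{v^i\}$ (so $z_1 = w$ is fixed) and integrate only over the primed Gaussians. The exponent $\langle w, w'\rangle = \sum_{j'} s_{j'}\, (x^{j'})^T \bigl(\sum_j w_j L^j\bigr) y^{j'}$ factors over $j'$ into independent bilinear forms in $(x^{j'},y^{j'})$, each yielding $(1-\|Q^{j'}\|_F^2)^{-1/2}$ by Lemma~\ref{lem:Eexp(x^TAy)}. The key identity you are missing, coming exactly from the orthonormality of the $L^j$, is
\[
\|Q^{j'}\|_F^2 \;=\; s_{j'}^2 \Bigl\|\sum_{j} w_j L^j\Bigr\|_F^2 \;=\; s_{j'}^2 \|w\|_2^2 \;=\; s_{j'}^2\,\xi,
\]
so \emph{all} the Frobenius norms are controlled by the single scalar $\xi = \|w\|_2^2$, which depends only on the unprimed Gaussians. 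The remaining control is a single Markov bound, not a union bound: $\E\xi = k\|s\|_2^2$, hence $\Pr[\|s\|_2^2\xi \ge 1/2] \le 2k\|s\|_2^4 \le 2c$, and on the complementary event $\mathcal{E}$ every $\|Q^{j'}\|_F^2 \le s_{j'}^2/(2\|s\|_2^2) \le 1/2 < 1$, so Lemma~\ref{lem:Eexp(x^TAy)} is applicable. The product then telescopes: $\prod_{j'}(1-s_{j'}^2\xi)^{-1/2} \le (1-\|s\|_2^2\xi)^{-1/2} \le 1+\|s\|_2^2\xi$, and taking expectation over the unprimed Gaussians gives the clean bound $1+k\|s\|_2^4$. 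The defect from conditioning is absorbed into the total-variation budget via the triangle inequality $d_{TV}(\LL_1,\LL_2) \le d_{TV}(\LL_1,\tilde\LL_2)+d_{TV}(\LL_2,\tilde\LL_2)$ with $d_{TV}(\LL_2,\tilde\LL_2) \le d_{TV}(\mu,\tilde\mu) \le 1-\Pr(\mathcal{E})$. In short, the two ideas you are missing are (i) the conditioning event $\mathcal{E}$ to make Lemma~\ref{lem:Eexp(x^TAy)} applicable, and (ii) the collapse $\|Q^{j'}\|_F^2 = s_{j'}^2\|w\|_2^2$ from orthonormality, which is what makes the bookkeeping trivial rather than intricate.
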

\begin{proof}
It is not difficult to verify that $\LL_1 = N(0,I_k)$ and $\LL_2 = N(0,I_k) + \mu$, where $\mu$ is the distribution of
\[
\begin{pmatrix}
\sum_{i=1}^r s_i (u^i)^T L^1 v^i\\
\sum_{i=1}^r s_i (u^i)^T L^2 v^i\\
\vdots\\
\sum_{i=1}^r s_i (u^i)^T L^k v^i\\
\end{pmatrix}.
\]
Consider a random variable (we shall see in a moment where it comes from)
\[
\xi = \sum_{i=1}^k \sum_{j, l=1}^r \sum_{a,c=1}^m \sum_{b,d=1}^n s_j s_l (L^i)_{ab}(L^i)_{cd} (u^j)_a (v^j)_b (u^l)_c (v^l)_d.
\]
Take expectation on both sides and notice that the non-vanishing terms on the right-hand side must have $j=l$, $a=c$ and $b=d$, 
\[
\E\xi = \sum_{i=1}^k \sum_{j=1}^r \sum_{a=1}^m \sum_{b=1}^n s_j^2 (L^i)_{ab}^2 \E(u^j)_a^2 \E(v^j)_a^2 = k\|s\|_2^2.
\]
Define an event $\mathcal{E} = \{ \|s\|^2\xi < 1/2\}$ and it follows from our assumption and Markov's inequality that $\Pr(\mathcal{E}) \geq 1-2c$. Restrict $\mu$ to this event and denote the induced distribution by $\tilde\mu$. Let $\tilde \LL_2 = N(0,I_n) + \tilde\mu$.

Then the total variation distance between $\LL_1$ and $\LL_2$ can be upper bounded as
\begin{align*}
d_{TV}(\LL_1,\LL_2) &\leq d_{TV}(\LL_1,\tilde\LL_2) + d_{TV}(\LL_2,\tilde\LL_2)\\
&\leq \sqrt{\E_{z_1,z_2\sim \tilde\mu}e^{\langle z_1,z_2\rangle} - 1} + d_{TV}(\mu,\tilde\mu)\\
&\leq \sqrt{\frac{1}{\Pr(\mathcal{E})}(\E_{z_1\sim \tilde\mu, z_2\sim\mu}e^{\langle z_1,z_2\rangle} - 1)} + \frac{1}{\Pr(\mathcal{E})}-1
\end{align*}
and we shall bound $\E e^{\langle z_1,z_2\rangle}$ in the rest of the proof.
\begin{align*}
\E_{z_1\sim\tilde\mu,z_2\sim\mu} e^{\langle z_1,z_2\rangle} &= \E \exp\left\{ \sum_{i=1}^k \sum_{j,a,b} \sum_{j',a',b'} s_j (L^i)_{ab} (u^j)_a (v^j)_b \cdot s_{j'} (L^i)_{a'b'} (x^{j'})_{a'} (y^{j'})_{b'} \right\}\\
&= \E_{u^1,\dots,u^r,v^1\dots,v^r|\tilde\mu} \prod_{j'=1}^r \E_{\substack{x_{j'}\sim N(0,I_m)\\y_{j'}\sim N(0,I_n)}} \exp\left\{ \sum_{a',b'} Q^{j'}_{a',b'} (x^{j'})_{a'} (y^{j'})_{b'} \right\},
\end{align*}
where 
\[
Q^{j'}_{a',b'}  = s_{j'}\sum_{i=1}^k \sum_{j,a,b} (L^i)_{ab}(L^i)_{a'b'}\cdot s_j  (u^j)_a (v^j)_b.
\]
In order to apply the preceding lemma, we need to verify that $\|Q^{j'}\|_F^2 < 1$. Indeed,
\begin{align*}
\|Q^{j'}\|_F^2 &= \sum_{a',b'} (Q^{j'})_{a',b'}^2\\
&= s_{j'}^2 \sum_{a',b'} \sum_{i,i'} \sum_{j,a,b}\sum_{\ell,c,d} s_j(L^i)_{ab}(L^i)_{a'b'}(u^j)_a (v^j)_b \cdot s_\ell(L^{i'})_{cd}(L^{i'})_{a'b'}(u^\ell)_c (v^\ell)_d\\
&= s_{j'}^2 \sum_{a',b'} \sum_{i} (L^i)^2_{a'b'} \sum_{j,a,b}\sum_{\ell,c,d} s_j(L^i)_{ab}(u^j)_a (v^j)_b \cdot s_\ell(L^{i})_{cd}(u^\ell)_c (v^\ell)_d \quad (i\text{ must equal to }i')\\
&= s_{j'}^2 \sum_{i} \sum_{j,a,b}\sum_{\ell,c,d} s_j(L^i)_{ab}(u^j)_a (v^j)_b \cdot s_\ell(L^{i})_{cd}(u^\ell)_c (v^\ell)_d\\
&= s_{j'}^2 \xi < 1
\end{align*}
since we have conditioned on $\mathcal{E}$. Now it follows from the preceding lemma that
\begin{align*}
\E_{u^1,\dots,u^r,v^1\dots,v^r} \prod_{i=1}^r \E_{x_{j'},y_{j'}} \exp\left\{ \sum_{a',b'} Q^{j'}_{a',b'} (x^{j'})_{a'} (y^{j'})_{b'} \right\} 
&\leq \E_{u^1,\dots,u^r,v^1\dots,v^r} \prod_{j'=1}^r \frac{1}{\sqrt{1-s_{j'}^2\xi}}\\
&\leq \E_{u^1,\dots,u^r,v^1\dots,v^r} \frac{1}{\sqrt{1-\|s\|^2\xi}} \\
&\leq 1+\|s\|^2\E\xi \\
&\leq 1 + k\|s\|^4,
\end{align*}
where, in the third inequality, we used the fact that $1/\sqrt{1-x}\leq 1+x$ for $x\in [0,1/2]$.
Therefore,
\[
d_{TV}(\LL_1,\LL_2) \leq \sqrt{\frac{k\|s\|^4}{1-2c}} + \frac{2c}{1-2c} 
\leq \sqrt{\frac{c}{1-2c}} + \frac{2c}{1-2c} 
\leq \frac{1}{10}
\]
when $c > 0$ is small enough.
\end{proof}

We will apply the preceding theorem to obtain our lower bounds for the 
applications. To do so, notice that by Yao's minimax principle, we can
fix the rows of our sketching matrix, and show that the resulting distributions
$\mathcal{L}_1$ and $\mathcal{L}_2$ above have small total variation distance.
By standard properties of the variation distance, 
this implies that no estimation procedure $f$ can be used to distinguish
the two distributions with sufficiently large probability, thereby establishing
our lower bound. 

\begin{corollary}[$\alpha$-approximation to operator norm] Let $c>0$ be an arbitrarily small constant. For $\alpha \geq 1+c$, any sketching algorithm that estimates $\|X\|_{op}$ for $X\in \R^{n\times n}$ within a factor of $\alpha$ with error probability $\leq 1/6$ requires sketching dimension $\Omega(n^2/\alpha^4)$.
\end{corollary}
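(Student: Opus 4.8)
The plan is to instantiate Theorem~\ref{thm:sketch_lower_bound} with a well-chosen hard instance and then invoke Yao's minimax principle together with Proposition~\ref{prop:TV_chi^2} (already folded into the theorem's statement). Concretely, I would take $m=n$, set the number of planted rank-one terms to $r=n$, and choose the weight vector $s\in\R^n$ to have all equal coordinates, $s_i = \beta/\sqrt{n}$ for a suitable constant $\beta = \Theta(\alpha)$; then $\|s\|_2 = \beta = \Theta(\alpha)$, so $\D_1 = \G(n,n)$ is the "no planted signal" case and $\D_2 = \G(n,n) + \sum_i s_i u^i(v^i)^T$ is a Gaussian matrix plus a rank-$n$ Gaussian-random perturbation of operator norm roughly $\|s\|_2 = \Theta(\alpha)$ times a $\sqrt n$ factor coming from the random $u^i,v^i$. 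Wait — one must be careful here: $\sum_i s_i u^i(v^i)^T$ with $u^i,v^i$ i.i.d.\ Gaussian and $s_i$ all equal is (in distribution) $\frac{\beta}{\sqrt n}\sum_i u^i (v^i)^T$, whose operator norm concentrates around $\frac{\beta}{\sqrt n}\cdot\Theta(n) = \Theta(\beta\sqrt n)$, while a plain $\G(n,n)$ has operator norm $\Theta(\sqrt n)$. So after rescaling the whole instance by $1/\sqrt n$, case (1) ($\D_2$) has operator norm $\Theta(\beta)$ and case (2) ($\D_1$) has operator norm $\Theta(1)$, giving a gap of $\Theta(\beta)=\Theta(\alpha)$ in the operator norm as claimed in the introduction.

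The key steps, in order, are: (i) verify that $\D_1$ and $\D_2$ as rescaled above realize the two-case gap — i.e.\ show with probability $\ge 5/6$ that a draw from $\D_1$ has $\|X\|_{op} \le C_1$ and a draw from $\D_2$ has $\|X\|_{op}\ge C_2$ with $C_2/C_1 > \alpha$ — using standard non-asymptotic bounds on the largest singular value of a Gaussian matrix (e.g.\ $\|\G(n,n)\|_{op}\le 2\sqrt n + t$ w.h.p.) and on $\|\sum_i u^i(v^i)^T\|_{op}$; (ii) note that since $s_i\propto 1/\sqrt n$ we have $\|s\|_2^4 = \beta^4 = \Theta(\alpha^4)$, so Theorem~\ref{thm:sketch_lower_bound} gives $d_{TV}(\LL_1,\LL_2)\le 1/10$ provided $k \le c/\|s\|_2^4 = \Omega(1/\alpha^4)$ — but this is off by the $n^2$ factor, so in fact I need to reconsider the scaling of $s$; (iii) the correct choice is to not rescale $s$ down by $\sqrt n$: instead take $r = n^2$ (or keep $r$ moderate but) choose $\|s\|_2$ as large as the analysis permits while still having the operator-norm gap — here the right move is to recall that the relevant lower bound parameter is $k=\Omega(1/\|s\|_2^4)$ applied after first embedding an $n\times n$ instance into a large ambient matrix, OR, more cleanly, to reinterpret: the sketching dimension lower bound $\Omega(n^2/\alpha^4)$ should come from applying the theorem with $\|s\|_2^2 = \Theta(\alpha^2/n)$ only if the per-row trace normalization absorbs a factor of $n^2$. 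I would therefore carefully track: the sketch acts on $nd = n^2$ real entries, the rows $L^i$ are orthonormal in $\R^{n^2}$, and $\E\xi = k\|s\|_2^2$; the constraint for indistinguishability is $k\|s\|_2^4 = O(1)$, i.e.\ $k = O(1/\|s\|_2^4)$.

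Reconciling the $n^2$: the point is that to get a \emph{constant-factor} operator norm gap between $\D_1 = \G(n,n)$ (norm $\Theta(\sqrt n)$) and $\D_2$, we need the planted part $\sum_{i=1}^r s_i u^i(v^i)^T$ to itself have operator norm $\Theta(\alpha\sqrt n)$; with $r$ rank-one Gaussian terms of equal weight $s_i = t$, the planted part has operator norm $\approx t(\sqrt r + \sqrt n)^2/\sqrt{\max(r,n)}$-ish — more simply, take $r=1$: a single term $s_1 u^1(v^1)^T$ has operator norm $\approx s_1 n$ (since $\|u^1\|\|v^1\|\approx n$). So picking $r=1$ and $s_1 = \Theta(\alpha/\sqrt n)$ makes the planted matrix have operator norm $\Theta(\alpha\sqrt n)$, matching case (1): one singular value $\Theta(\alpha)$ (after $/\sqrt n$ rescaling), rest $\Theta(1)$ from the $\G(n,n)$ background — exactly the hard instance described in the introduction. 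Then $\|s\|_2^2 = s_1^2 = \Theta(\alpha^2/n)$, so $\|s\|_2^4 = \Theta(\alpha^4/n^2)$, and Theorem~\ref{thm:sketch_lower_bound} yields $d_{TV}(\LL_1,\LL_2)\le 1/10$ whenever $k \le c/\|s\|_2^4 = \Omega(n^2/\alpha^4)$. Combining this with step (i)'s norm separation and standard boosting (the $1/6$ error probability versus the $1/10$ TV distance leaves room) gives the claimed lower bound via Yao.

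I expect the main obstacle to be step (i): cleanly establishing that $\D_1$ and $\D_2$ genuinely fall on opposite sides of the $\alpha$-approximation threshold with high constant probability, which requires a lower bound on $\|\G(n,n) + s_1 u^1(v^1)^T\|_{op}$ (one can lower bound by testing against the unit vector $v^1/\|v^1\|$, getting $\gtrsim s_1\|u^1\|\|v^1\| - \|\G(n,n)v^1/\|v^1\|\|_2 \approx \alpha\sqrt n - \sqrt n$, which is $\ge (\alpha - 1 - o(1))\sqrt n$) and matching upper bounds on $\|\G(n,n)\|_{op}\le (2+o(1))\sqrt n$; I then need the gap $(\alpha-1)\sqrt n$ versus the "yes" instance's $(2+o(1))\sqrt n$ to actually be a multiplicative $\alpha$-gap, which forces a slightly different normalization — one tunes $s_1$ so that case (1) has operator norm $\approx(1+c')\alpha\cdot C$ and case (2) has operator norm $\le(1+c')C$ for the same $C=\Theta(\sqrt n)$, using that $c$ in the corollary is an arbitrarily small constant giving slack. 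The rest is bookkeeping: confirming $\|s\|_2<1$ isn't needed (the theorem allows any $\|s\|_2$), confirming the orthonormal-rows reduction, and confirming that distinguishing the two norm-cases with probability $>1/6+1/10$ would contradict $d_{TV}(\LL_1,\LL_2)\le 1/10$.
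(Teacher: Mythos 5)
Your proposal, after correcting course mid-way, lands on exactly the paper's construction: $m=n$, $r=1$, $s_1 = \Theta(\alpha/\sqrt n)$, giving $\|s\|_2^4 = \Theta(\alpha^4/n^2)$ and hence $k = \Omega(n^2/\alpha^4)$ via Theorem~\ref{thm:sketch_lower_bound}. The paper's proof is essentially the one-liner "take $r=1$, $s_1 = C\alpha/\sqrt n$ for $C$ large enough," with the norm-separation verification left implicit, so your fleshed-out step (i) and the final "Reconciling" paragraph are the same argument stated more explicitly.
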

\begin{proof}
Let $m=n$ and take $r=1$ and $s_1 = C\alpha/\sqrt{n}$ for some constant $C$ large enough in $\D_2$ and apply the preceding theorem.
\end{proof}

\begin{corollary}[Schatten norms] There exists an absolute constant $c>0$ such that any sketching algorithm that estimates $\|X\|_p^p$ ($p>2$) for $X\in \R^{n\times n}$ within a factor of $1+c$ with error probability $\leq 1/6$ requires sketching dimension $\Omega(n^{2(1-2/p)})$.
\end{corollary}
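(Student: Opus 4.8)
\emph{Proof proposal.} The plan is to apply Theorem~\ref{thm:sketch_lower_bound} to the same family of hard instances used for the operator norm, but to track $\|X\|_p^p$ rather than $\sigma_1(X)$ — this amounts to the choice ``$\alpha = n^{1/p}$'' in the previous corollary. Concretely, take $m=n$, $r=1$, and $s_1 = C n^{1/p-1/2}$ for an absolute constant $C$ to be fixed, so that $\|s\|_2 = |s_1| = C n^{1/p-1/2}$, $\D_1 = \G(n,n)$, and $\D_2 = \G(n,n) + s_1 u (v)^T$ with $u,v\sim N(0,I_n)$ independent. Theorem~\ref{thm:sketch_lower_bound} then gives $d_{TV}(\LL_1,\LL_2) \le 1/10$ whenever $k \le c/\|s\|_2^4 = (c/C^4)\, n^{2-4/p}$, and since $2-4/p = 2(1-2/p)$, it suffices to show that a factor-$(1+c)$ sketch for $\|X\|_p^p$ with error $\le 1/6$ would distinguish $\LL_1$ from $\LL_2$ with advantage more than $1/10$.

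To exhibit the gap I would use only elementary Gaussian concentration. For $X\sim\D_1=\G(n,n)$: the bound $\E\|X\|_{op}\le 2\sqrt n$ together with Gaussian Lipschitz concentration gives $\|X\|_{op}\le 3\sqrt n$ with high probability, whence $\|X\|_p^p \le n\,\|X\|_{op}^p \le 3^p n^{1+p/2}$ with high probability. For $X\sim\D_2$, write $X = G + s_1 u v^T$ with $G\sim\G(n,n)$; since $\|u\|_2,\|v\|_2 \ge 0.9\sqrt n$ and $\|G\|_{op}\le 3\sqrt n$ with high probability, the triangle inequality for $\|\cdot\|_{op}$ gives $\sigma_1(X) \ge \|s_1 u v^T\|_{op} - \|G\|_{op} = s_1\|u\|_2\|v\|_2 - \|G\|_{op} \ge 0.8\,C\,n^{1/p+1/2}$ for $n$ large enough, so $\|X\|_p^p \ge \sigma_1(X)^p \ge (0.8C)^p n^{1+p/2}$ with high probability. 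Choosing $C$ a sufficiently large absolute constant makes $(0.8C/3)^p \ge (8/3)^p > (1+c)^2$ for all $p>2$ and a fixed small constant $c>0$; this is the required multiplicative separation. This is exactly where the argument goes beyond the operator-norm corollary: a single rank-one spike of magnitude $\Theta(C n^{1/p+1/2})$ already carries $p$-th-power mass $\Theta(C^p n^{1+p/2})$, which dominates the entire Schatten-$p$ mass $\Theta(n^{1+p/2})$ of the Gaussian bulk once $C$ is large.

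I would finish exactly as in the preceding corollary. By Yao's principle, fix the orthonormal rows of the sketch and a deterministic estimator $f$; running it against the equal mixture of $\D_1$ and $\D_2$, an error probability $\le 1/6$ forces $\Pr_{\D_1}[\text{correct}],\Pr_{\D_2}[\text{correct}] \ge 2/3$. Combining with the high-probability value bounds, the event $\{f(S(X),S) \le (1+c)3^p n^{1+p/2}\}$ has probability $\ge 2/3 - o(1)$ under $\LL_1$ and $\le 1/3 + o(1)$ under $\LL_2$, so $d_{TV}(\LL_1,\LL_2) \ge 1/3 - o(1) > 1/10$, contradicting Theorem~\ref{thm:sketch_lower_bound} unless $k > (c/C^4)n^{2-4/p} = \Omega(n^{2(1-2/p)})$. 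The only part requiring any care — and the mildest of obstacles — is assembling the standard concentration estimates for $\|\G\|_{op}$ and the Gaussian vectors $u,v$, and checking that the constant $C$ can be chosen uniformly over $p>2$; I do not expect a genuine difficulty.
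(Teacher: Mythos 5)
Your proposal is correct and follows essentially the same route as the paper: the identical hard instance (planted rank-one spike with $r=1$ and $s_1 = \Theta(n^{1/p-1/2})$) fed into Theorem~\ref{thm:sketch_lower_bound}, yielding the $\Omega(n^{2-4/p})$ bound. The only difference is that the paper delegates the Schatten-$p$ value-gap verification to \cite{lnw14b}, whereas you supply it directly via the elementary bounds $\|X\|_p^p\le n\,\sigma_1(X)^p$ for the bulk and $\|X\|_p^p\ge \sigma_1(X)^p$ for the spiked case, which is exactly the argument the citation covers.
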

\begin{proof}
Let $m=n$ and take $r=1$ and $s_1 = 5/n^{1/2-1/p}$ in $\D_2$. Note that $\|X\|_p^p$ differs by a constant factor with high probability when $X\sim \D_1$ and $X\sim \D_2$ (the same hard distribution as in \cite{lnw14b}), apply the preceding theorem.
\end{proof}

\begin{corollary} Let $\epsilon\in (0,1/3)$. For any matrix $X\in \R^{(d/\epsilon^2)\times d}$, any sketching algorithm that estimates $\|X\|_{op}$ within a factor of $1+\epsilon$ with error probability $\leq 1/6$ requires sketching dimension $\Omega(d^2/\epsilon^2)$.
\end{corollary}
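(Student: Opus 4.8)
The plan is to apply Theorem~\ref{thm:sketch_lower_bound} with $m=d/\epsilon^2$, $n=d$, $r=1$, and $s_1=C\sqrt{\epsilon/d}$ for a sufficiently large absolute constant $C$. Then $\|s\|_2^4=C^4\epsilon^2/d^2$, so the theorem yields $d_{TV}(\LL_1,\LL_2)\le 1/10$ whenever $k\le (c/C^4)\,d^2/\epsilon^2$. It suffices to show that on this pair of distributions a $(1\pm\epsilon)$-factor estimate of the operator norm can tell $\D_1$ from $\D_2$: by the standard argument---use Yao's minimax principle to fix one sketch $S$ (with orthonormal rows, as in the setup of Theorem~\ref{thm:sketch_lower_bound}) whose error over the equal mixture of $\D_1,\D_2$ is at most $1/6$, then apply data processing---this forces $k>(c/C^4)\,d^2/\epsilon^2=\Omega(d^2/\epsilon^2)$.

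For the upper tail on $\D_1=\G(m,n)$ I would combine Gordon's inequality $\E\,\sigma_1(\G(m,n))\le\sqrt m+\sqrt n$ with Gaussian Lipschitz concentration ($\sigma_1$ is $1$-Lipschitz in the entries) to get $\|X\|_{op}\le\sqrt m+\sqrt n+O(1)$ with probability $\ge 0.99$; since $\sqrt m+\sqrt n=(1+\epsilon)\sqrt m$ and $\sqrt m=\sqrt d/\epsilon$, the additive $O(1)$ is an $O(\epsilon/\sqrt d)$ relative error, so $\|X\|_{op}\le(1+o(\epsilon))(\sqrt m+\sqrt n)$ once $d$ exceeds an absolute constant. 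For the lower tail on $\D_2=\G(m,n)+s_1uv^{T}$ I would test with the vector $v$, using $\|X\|_{op}\ge\|(G+s_1uv^{T})v\|_2/\|v\|_2$: since $G$ is independent of $u$, conditionally on $v$ we have $(G+s_1uv^{T})v=Gv+s_1\|v\|_2^2\,u\sim N\!\big(0,\,\|v\|_2^2(1+s_1^2\|v\|_2^2)I_m\big)$, so this ratio is distributed as $\sqrt{1+s_1^2\|v\|_2^2}\,\|g\|_2$ with $g\sim N(0,I_m)$. Now $\|g\|_2\sim\chi_m$ concentrates to $\sqrt m\,(1\pm O(\epsilon/\sqrt d))$ ($\sqrt m=\sqrt d/\epsilon$ is large), and although $\|v\|_2^2\sim\chi^2_d$ has only constant relative accuracy it enters as $s_1^2\|v\|_2^2\approx C^2\epsilon$, so its fluctuation merely rescales $C$. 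Hence, with probability $\ge 0.9$, $\|X\|_{op}\ge(1-o(\epsilon))\sqrt{1+c'C^2\epsilon}\,\sqrt m$ for an absolute constant $c'>0$.

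Choosing $C$ large enough that $\sqrt{1+c'C^2\epsilon}\ge(1+\epsilon)^4$ for all $\epsilon\in(0,1/3)$, the bound for $\D_2$ reads $\|X\|_{op}\ge(1-o(\epsilon))(1+\epsilon)^3(\sqrt m+\sqrt n)$, which for $d$ larger than an absolute constant exceeds $\tfrac{1+\epsilon}{1-\epsilon}$ times the upper bound for $\D_1$; hence the intervals $[(1-\epsilon)\|X\|_{op},(1+\epsilon)\|X\|_{op}]$ for the two distributions are disjoint and a correct $(1\pm\epsilon)$-estimator distinguishes them with probability bounded away from $1/2$, giving the required contradiction with $d_{TV}(\LL_1,\LL_2)\le 1/10$. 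For $d$ below that absolute constant, $\Omega(d^2/\epsilon^2)=\Omega(1/\epsilon^2)$ already follows from the trivial fact that no linear sketch of dimension below $d/\epsilon^2$ can estimate the $\ell_2$-norm of a vector in $\R^{d/\epsilon^2}$ (take the matrix whose only nonzero column is that vector), since unit vectors in the kernel of the sketch are confused with $0$.

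I expect the main obstacle to be the spectral estimates of the second paragraph: the sought separation is only a $1+\Theta(\epsilon)$ factor in the operator norm, so all concentration must be accurate at relative scale $\epsilon$. The reason it goes through is the split noted above---the quantities whose errors must genuinely be $o(\epsilon)$ (namely $\sigma_1(\G(m,n))$ and $\|g\|_2$) fluctuate by only $O(1)$ against the large scale $\sqrt m=\sqrt d/\epsilon$, hence by $O(\epsilon/\sqrt d)$ relatively, whereas the poorly-concentrated quantity $\|v\|_2^2\sim\chi^2_d$ (the matrix has only $d$ columns) enters only through $s_1^2\|v\|_2^2\approx C^2\epsilon$ and is absorbed into the choice of $C$.
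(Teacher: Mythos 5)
Your proof is correct and takes essentially the same route as the paper's: same choice of parameters $m=d/\epsilon^2$, $n=d$, $r=1$, $s_1=\Theta(\sqrt{\epsilon/d})$; same test vector $v$ for the lower bound on $\D_2$; and the same observation that $\|(G+s_1uv^T)v\|_2/\|v\|_2$ is distributed as $\sqrt{1+s_1^2\|v\|_2^2}\,\|g\|_2$ with $g\sim N(0,I_m)$ (the paper writes this via $\|u_1+s_1 t u_2\|_2\eqdist\sqrt{1+s_1^2t^2}\,\|u\|_2$; note the paper's displayed $\sqrt{1+\alpha^2\eps^2t^2/d}$ has a typo and should read $\sqrt{1+\alpha^2\eps t^2/d}$, consistent with the line that follows). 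One small caveat: your separate handling of the case of bounded $d$ via the ``kernel'' argument is not actually valid --- the sketch $S$ is random, so there is no fixed unit vector you can place in its kernel, and in any event that argument would only rule out exact computation, not a $(1+\epsilon)$-approximation; but since the paper likewise implicitly assumes $d$ larger than an absolute constant (its concentration bounds hold with probability $1-e^{-\Omega(d)}$), this does not represent a gap relative to the paper's own treatment.
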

\begin{proof}
Let $m = d/\epsilon^2$ and $n = d$. Take $r=1$ and $s_1 = \alpha\sqrt{\epsilon/d}$ for some constant $\alpha > 0$ large enough and apply Theorem~\ref{thm:sketch_lower_bound}. Next we shall justify this choice of parameters, that is,
\[
G \qquad\text{ and }\qquad G + \alpha\sqrt{\frac{\epsilon}{d}} uv^T
\]
differ in operator norm by a factor of $1+\epsilon$ for some constant $\alpha$. This is de facto proved in the proof of Theorem 7.3 of~\cite{LWW21}; nevertheless, we include a full proof below for completeness. First, it follows from the standard result~\cite{V12} that
\[
\|G\|_{op} \leq \frac{\sqrt{d}}{\epsilon} + 1.1\sqrt{d} = (1+1.1\epsilon)\frac{\sqrt{d}}{\epsilon}
\]
with probability at least $1-e^{-\Omega(d)}$. Next we shall show that
\[
\left\|G + \alpha\sqrt{\frac{\epsilon}{d}} uv^T\right\|_{op}\geq (1+2\epsilon)\frac{\sqrt{d}}{\epsilon}
\]
with high probability. Observe that (denoting the unit sphere in $\R^d$ by $\bS^{d-1}$)
\begin{align*}
\left\|G + \alpha\sqrt{\frac{\epsilon}{d}} uv^T\right\|_{op} = \sup_{x\in \bS^{d-1}} \left\|\left(G+ \alpha\sqrt{\frac{\epsilon}{d}} uv^T\right)x\right\|_2 
&\geq \left\|\left(G + \alpha\sqrt{\frac{\epsilon}{d}} uv^T \right)\frac{v}{\|v\|_2}\right\|_2 \\
&= \left\|G\frac{v}{\|v\|_2} + \alpha\sqrt{\frac{\epsilon}{d}} u \|v\|_2 \right\|_2.
\end{align*}
Since $v\sim N(0,I_d)$, the direction $v/\|v\|_2 \sim \Unif(\bS^{d-1})$ and the magnitude $\|v\|_2$ are independent, and by rotational invariance of the Gaussian distribution, $Gx\sim N(0,I_d)$ for any $x\in \bS^{d-1}$. Hence
\[
\left\|G\frac{v}{\|v\|_2} + \alpha\sqrt{\frac{\epsilon}{d}} u \|v\|_2 \right\|_2 \eqdist \left\|u_1 + \alpha\sqrt{\frac{\epsilon}{d}} t u_2\right\|_2 \eqdist \sqrt{1+\frac{\alpha^2 \eps^2 t^2}{d}}\|u\|_2,
\]
where $t$ follows the distribution of $\|v\|_2$, $u_1,u_2,u\sim N(0,I_m)$, $t, u_1, u_2$ are independent, $t$ and $u$ are independent. By the standard results of standard gaussian vectors, with probability at least $1-e^{-\Omega(d)}$, it holds that $t\geq \sqrt{d/2}$ and $\|u\|_2\geq (1-\eps)\sqrt{d}/\eps$. Therefore, with probability at least $1-e^{-\Omega(d)}$, we have
\[
\left\|G + \alpha\sqrt{\frac{\epsilon}{d}} uv^T\right\|_{op}\geq \sqrt{1 + \frac{\alpha^2 \eps}{2}}(1-\eps)\frac{\sqrt{d}}{\eps}\geq (1+2\eps)\frac{\sqrt d}{\eps}
\]
for all $\eps\in(0,1/3)$, provided that $\alpha\geq 3\sqrt{7/2}$.
\end{proof}

\begin{corollary}[Ky-fan norm]\label{cor:ky-fan}
There exists an absolute constant $c>0$ such that any sketching algorithm that estimates $\|X\|_{F_s}$ for $X\in \R^{n\times n}$ and $s\leq 0.0789\sqrt{n}$ within a factor of $1+c$ with error probability $\leq 1/6$ requires sketching dimension $\Omega(n^2/s^2)$.
\end{corollary}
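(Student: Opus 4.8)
The plan is to instantiate Theorem~\ref{thm:sketch_lower_bound} with a uniform rank-$s$ Gaussian perturbation and then to show that the Ky-Fan $s$-norm separates the two resulting distributions by a constant factor. Take $m=n$, $r=s$, and $s_1=\dots=s_s=\beta/\sqrt n$ in $\D_1=\G(n,n)$ and $\D_2=\G(n,n)+\sum_{i=1}^s s_i u^i(v^i)^T$, where $\beta$ is a large absolute constant to be fixed. Then $\|s\|_2^2=\beta^2 s/n$, so Theorem~\ref{thm:sketch_lower_bound} gives $d_{TV}(\LL_1,\LL_2)\le 1/10$ whenever $k\le (c/\beta^4)\,n^2/s^2$, i.e.\ for all $k=O(n^2/s^2)$ with a sufficiently small absolute constant. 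By the reduction via Yao's minimax principle described after Theorem~\ref{thm:sketch_lower_bound}, it then suffices to exhibit two intervals separated by a factor more than $(1+c)^2$ such that, with probability at least $4/5$, $\|X\|_{F_s}$ lies in the first interval when $X\sim\D_1$ and in the second when $X\sim\D_2$; indeed, a fixed-sketch estimator that $(1+c)$-approximates $\|X\|_{F_s}$ with error probability $\le 1/6$ would then distinguish $\LL_1$ from $\LL_2$ with advantage exceeding $1/10$, contradicting $d_{TV}(\LL_1,\LL_2)\le 1/10$.

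The heart of the matter is therefore the separation. Write $X=G$ under $\D_1$ with $G\sim\G(n,n)$, and $X=G+P$ under $\D_2$ with $P=\tfrac{\beta}{\sqrt n}UV^T$, where $U=[u^1\,|\,\cdots\,|\,u^s]$ and $V=[v^1\,|\,\cdots\,|\,v^s]$ in $\R^{n\times s}$ have i.i.d.\ $N(0,1)$ entries. Under $\D_1$ I would bound $\|G\|_{F_s}\le s\,\sigma_1(G)\le (2+o(1))\,s\sqrt n$ using the standard bound on the largest singular value of a Gaussian matrix~\cite{V12}, which holds with probability $1-e^{-\Omega(n)}$. Under $\D_2$, since $\|\cdot\|_{F_s}$ is a norm, the triangle inequality gives $\|G+P\|_{F_s}\ge\|P\|_{F_s}-\|G\|_{F_s}\ge\|P\|_{F_s}-(2+o(1))\,s\sqrt n$. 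Because $\rk(P)\le s$ we have $\|P\|_{F_s}=\|P\|_1\ge\|P\|_F^2/\|P\|_{op}$, and $\|P\|_F^2=\tfrac{\beta^2}{n}\tr(U^TU\,V^TV)\ge\tfrac{\beta^2}{n}\lambda_{\min}(U^TU)\,\tr(V^TV)$ while $\|P\|_{op}\le\tfrac{\beta}{\sqrt n}\|U\|_{op}\|V\|_{op}$. Since $s\le 0.0789\sqrt n=o(n)$, the non-asymptotic Gaussian matrix bounds of~\cite{V12} give $\lambda_{\min}(U^TU)\ge(1-o(1))n$, $\tr(V^TV)=\|V\|_F^2\ge(1-o(1))sn$, and $\|U\|_{op},\|V\|_{op}\le(1+o(1))\sqrt n$, each with probability at least $1-e^{-C_0}$ for an absolute constant $C_0$ we may take as large as we wish (at the cost only of the implicit constants in the $o(\cdot)$ terms). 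Hence $\|P\|_{F_s}\ge(1-o(1))\,\beta s\sqrt n$, so $\|G+P\|_{F_s}\ge(\beta-2-o(1))\,s\sqrt n$. Choosing $\beta$ a large enough absolute constant (so that $\beta-2$ exceeds $3(1+c)^2$) and $c$ small enough, the intervals $[0,2.1\,s\sqrt n]$ and $[3(1+c)^2\,s\sqrt n,\infty)$ furnish the required separation for all large $n$.

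The genuinely delicate step is the lower bound on $\|P\|_{F_s}$: it rests on the tall Gaussian matrices $U,V\in\R^{n\times s}$ being well conditioned, i.e.\ on $U^TU$ and $V^TV$ being close to $nI_s$, which is exactly why $s$ must lie well below $n$. The remaining work is bookkeeping---keeping the failure probabilities of all the random-matrix events comfortably below the $1/6$ error budget so that the hypothetical distinguisher retains advantage exceeding $1/10$, and optimizing the absolute constants in these estimates, which is what yields the clean threshold $s\le 0.0789\sqrt n$.
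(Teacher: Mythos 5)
Your proof is correct and establishes the stated lower bound, but it takes a genuinely different route from the paper for the crucial step of lower-bounding $\|P\|_{F_s}$ under $\D_2$. The paper works through the Courant--Fischer min--max characterization of singular values: it expands $x^T P^T P x$, separates diagonal from cross terms, uses concentration of $\|u_i\|_2$ and of $\sum_{i\ne j}\langle u_i,u_j\rangle$ to relate $\sigma_\ell(P)$ to $\sigma_\ell(V)$ up to an additive error, and then invokes the classical lower bound $\|V\|_1 \ge 0.99\,s\sqrt n$ for a Gaussian $s\times n$ matrix. You instead observe that since $\rk P\le s$ we have $\|P\|_{F_s}=\|P\|_1\ge\|P\|_F^2/\|P\|_{op}$, compute $\|P\|_F^2=\tfrac{\beta^2}{n}\tr(U^TU\,V^TV)\ge\tfrac{\beta^2}{n}\lambda_{\min}(U^TU)\tr(V^TV)$, and then only need the standard extremal-singular-value bounds for the tall Gaussian factors $U,V$ together with $\chi^2$ concentration for $\|V\|_F^2$. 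Your argument is shorter and avoids the min--max bookkeeping; the paper's argument, by being more explicit, is how it pins down the particular constant in the hypothesis $s\le 0.0789\sqrt n$.

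One remark on the final sentence of your proof: you attribute the threshold $s\le 0.0789\sqrt n$ to bookkeeping within your own estimates, but in fact your argument, once its $o(1)$ terms are made explicit, works as long as $U$ and $V$ are well conditioned, i.e.\ for all $s\le c_0 n$ with $c_0$ a small absolute constant. The numerical constant $0.0789\sqrt n$ is an artifact of the paper's particular additive-error step (comparing $\sigma_\ell^2(P)$ to $0.99^2 n\,\sigma_\ell^2(V)$ up to an $s\,n^{3/2}$ error term), not something your route would naturally produce; your route in fact proves the corollary for a wider range of $s$. This does not affect correctness, since a bound for $s\le c_0 n$ subsumes the stated $s\le 0.0789\sqrt n$, but the attribution of where the constant comes from is off.
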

\begin{proof}
Take $r=s$ and $s_1 = s_2 = \cdots = s_r = 5/\sqrt{n}$ in $\D_2$ and apply Theorem~\ref{thm:sketch_lower_bound}, for which we shall show the KyFan $s$-norms are different with high probability in the two cases. 

When $X\sim \mathcal{D}_1$, we know that $\sigma_1(X)\leq 2.1\sqrt{n}$ with high probability and thus $\|X\|_{F_s} \leq 2.1s\sqrt{n}$ with high probability. 

When $X\sim \mathcal{D}_2$, we can write $X = G + \frac{5}{\sqrt n}P$, where $P = u_1v_1^T + \cdots + u_s v_s^T$. We claim that with high probability $\|P\|_1 \geq 0.9sn$ and thus $\|X\|_{F_s}\geq \frac{5}{\sqrt n}\|P\|_{F_s} - \|G\|_{F_s} \geq 4.5s\sqrt{n} - 2.1s\sqrt{n}\geq 2.4s\sqrt{n}$, evincing a multiplicative gap of $\|X\|_{F_s}$ between the two cases. 

Now we prove the claim. With high probability, it holds that $0.99\sqrt{n} \leq \|u_i\|\leq 1.01\sqrt{n}$ for all $i$ and $|\sum_{i\neq j}\langle u_i,u_j\rangle|\leq 1.01s\sqrt{n}$.
%
%
We shall condition on these events below.

By the min-max theorem for singular values,
\[
\sigma_\ell^2(P) = \max_{H:\dim H = \ell} \min_{\substack{x\in H\\ \|x\|_2=1}} x^TP^TPx,
\] 
where
\begin{align*}
x^TP^TPx &= \sum_{i,j} x^T v_i u_i^T u_j v_j^T x\\
	&= \sum_i x^T v_i u_i^T u_i v_i^T x + \sum_{i\neq j} x^T v_i (u_i^T u_j) v_j^T x\\
	&\geq 0.99^2 n\sum_i x^T v_i^T v_i x - 1.01\sqrt{n} \cdot 1.01 k\sqrt{n}\cdot 1.01\sqrt{n}\\
	&= 0.99^2 n\sum_i x^T v_i^T v_i x - 1.01^3 k  n^{\frac{3}{2}}
\end{align*}
and thus, 
\begin{align*}
\sigma_\ell^2(P) &\geq 0.99^2 n \max_{H:\dim H = \ell} \min_{\substack{x\in H\\ \|x\|_2=1}} \sum_i x^T v_i^T v_i x - 1.01^3 k  n^{\frac{3}{2}}\\
&= 0.99^2 n\sigma_\ell^2(V) - 1.01^3 k  n^{\frac{3}{2}},
\end{align*}
where $V$ is a $k\times n$ matrix with rows $v_1^T,\dots, v_k^T$. Therefore
\[
\|P\|_1 \geq 0.99\sqrt{n}\|V\|_1 - 1.01^{\frac32} s^{\frac32} n^{\frac34}.
\]
Since $V$ is a Gaussian random matrix, the classical results imply that $\|V\|_1\geq 0.99s\sqrt{n}$ with high probability \cite{tao}. The claim follows from our assumption on $s$.
\end{proof}

\section{Conclusion}
We have presented a simple, surprisingly 
powerful new analysis which gives optimal bounds 
on the sketching dimension for a number of previously studied sketching problems, 
including approximating 
the operator norm, Schatten norms, and subspace embeddings. We have also presented
the first lower bounds for estimating Ky Fan norms. 
It would be interesting to see if there are other applications of this method to
the theory of linear sketches.

\bibliographystyle{plain}
\bibliography{literature}

\end{document}